\newtheorem{defn}{Definition}[section]
\newtheorem{thm}{Theorem}[section]
\newtheorem{cor}[thm]{Corollary}
\newtheorem{prop}{Proposition}
\newtheorem{lem}[thm]{Lemma}
\newtheorem{conj}[thm]{Conjecture}
\newtheorem{constr}[thm]{Construction}
\newtheorem{proposition}[thm]{Proposition}
\newtheorem{remark}{Remark}[section]
\newtheorem{example}{Example}[section]
\newcounter{definition}[section]
\newcommand{\bit}{\begin{itemize}}
\newcommand{\eit}{\end{itemize}}
\newcommand{\bcor}{\begin{cor}}
\newcommand{\ecor}{\end{cor}}
\newcommand{\beq}{\begin{equation}}
\newcommand{\eeq}{\end{equation}}
\newcommand{\beqn}{\begin{equation*}}
\newcommand{\eeqn}{\end{equation*}}
\newcommand{\bea}{\begin{eqnarray}}
\newcommand{\eea}{\end{eqnarray}}
\newcommand{\bean}{\begin{eqnarray*}}
\newcommand{\eean}{\end{eqnarray*}}
\newcommand{\ben}{\begin{enumerate}}
\newcommand{\een}{\end{enumerate}}
\newcommand{\bdefn}{\begin{defn}}
\newcommand{\edefn}{\end{defn}}
\newcommand{\bnote}{\begin{remark}}
\newcommand{\enote}{\end{remark}}
\newcommand{\bprop}{\begin{prop}}
\newcommand{\eprop}{\end{prop}}
\newcommand{\blem}{\begin{lem}}
\newcommand{\elem}{\end{lem}}
\newcommand{\bthm}{\begin{thm}}
\newcommand{\ethm}{\end{thm}}
\newcommand{\bconj}{\begin{conj}}
\newcommand{\econj}{\end{conj}}
\newcommand{\bconstr}{\begin{constr}}
\newcommand{\econstr}{\end{constr}}
\newcommand{\bpf}{\begin{proof}}
\newcommand{\epf}{\end{proof}}
\begin{document}

\title{Rate-Optimal Streaming Codes for Channels with Burst and Isolated Erasures}
\author{\IEEEauthorblockN{M. Nikhil Krishnan, {\em Student Member, IEEE}, P. Vijay Kumar, {\em Fellow, IEEE}}\\
	\IEEEauthorblockA{Electrical Communication Engineering, Indian Institute of Science, Bangalore - 560012 \\
		email: \{nikhilkrishnan.m, pvk1729\}@gmail.com}
	\thanks{P. Vijay Kumar is also a Visiting Professor at the University of Southern California. This research is supported in part by the National Science Foundation under Grant 1421848 and in part by an India-Israel UGC-ISF joint research program grant. M. Nikhil Krishnan would like to acknowledge the support of Visvesvaraya PhD Scheme for Electronics \& IT awarded by Department of Electronics and Information Technology, Government of India.}
%
%
%
%
}	

	\maketitle

\begin{abstract}
Recovery of data packets from packet erasures in a timely manner is critical for many streaming applications. An early paper by Martinian and Sundberg introduced a framework for streaming codes and designed rate-optimal codes that permit delay-constrained recovery from an erasure burst of length up to $B$. A recent work by Badr et al. extended this result and introduced a sliding-window channel model $\mathcal{C}(N,B,W)$. Under this model, in a sliding-window of width $W$, one of the following erasure patterns are possible (i) a burst of length at most $B$ or (ii) at most $N$ (possibly non-contiguous) arbitrary erasures. Badr et al. obtained a rate upper bound for streaming codes that can recover with a time delay $T$, from any erasure patterns permissible under the $\mathcal{C}(N,B,W)$ model. However, constructions matching the bound were absent, except for a few parameter sets. In this paper, we present an explicit family of codes that achieves the rate upper bound for all feasible parameters $N$, $B$, $W$ and $T$.
\end{abstract}

\section{Introduction}\label{sec:intro}
In many multimedia streaming applications, where packet losses are a norm, fast recovery of lost packets is often desirable. Conventional coding schemes like linear block codes, while helpful in combating erasures, can have increased delays due to buffering of data. In \cite{MartSunTIT04}, Martinian and Sundberg  introduce a framework for convolutional codes with decoding delay as an explicit parameter. We refer to these codes as {\it streaming codes}. The setting considered is as follows:

At each time instance $t \in \{0,1,2,\ldots\}$, an encoder $E$ receives a message packet $\mathbf{s}[t]$ from a source stream. Each message packet $\mathbf{s}[t]$ consists of $k$ symbols drawn from a finite field $\mathbb{F}_q$ of size $q$. In other words, $\mathbf{s}[t] \triangleq(s_0[t]\ s_1[t]\ \ldots\ s_{k-1}[t])^\intercal\in \mathbb{F}_q^k $. $E$ is a convolutional encoder and emits a coded packet $\mathbf{x}[t]$ at time $t$, where $\mathbf{x}[t]\triangleq(x_0[t]\ x_1[t]\ \ldots\ x_{n-1}[t])^\intercal\in\mathbb{F}_q^n$. $E$ is also assumed to be causal, wherein each coded packet $\mathbf{x}[t]$ is a function  of the message packets until time $t$, i.e., $\mathbf{s}[0],\mathbf{s}[1],\ldots,\mathbf{s}[t]$. Between the encoder-decoder pair, there exists a channel which introduces erasures at packet level. Let $\mathbf{y}[t]$ denote the packet received at the decoder end. We have:
\bean
\mathbf{y}[t] & = & \left\{ \begin{array}{rl} * & \text{if $\mathbf{x}[t]$ is erased}, \\
	\mathbf{x}[t] & \text{otherwise}. \end{array} \right.
\eean

Let $\hat{\mathbf{s}}[t]$ denote the decoded packet corresponding to ${\mathbf{s}}[t]$. The decoder is delay-constrained with a delay parameter $T$. i.e., each decoded message packet $\hat{\mathbf{s}}[t]$ is obtained as a function of received coded packets $\{\mathbf{y}[0],\mathbf{y}[1],\ldots,\mathbf{y}[t+T]\}$. Note that some of these packets can possibly be erased by the channel.  The {\it rate} $R$ of the code is naturally defined as $\frac{k}{n}$.  

 In \cite{MartSunTIT04}, the authors consider a channel which introduces a burst of length at most $B$. Together with the delay-constraint $T$, the streaming codes designed for this channel can in fact tolerate multiple erasure bursts, each of length at most $B$, with a guard space of at least $T$ between consecutive bursts. However, as the codes presented in \cite{MartSunTIT04} are tuned for burst erasures, they are sensitive to isolated erasures. In a recent work by Badr et al. \cite{BadrPatilKhistiTIT17}, the authors introduce a sliding-window based channel model which accounts for burst erasures and isolated erasures. Under the sliding-window channel model, in any sliding-window of width $W$, the channel will have one of the following erasure patterns; (a) a burst erasure of length at most $B$ {\it or} (b) up to $N$ erasures at arbitrary locations within the sliding-window. The channel is denoted by $\mathcal{C}(N,B,W)$. Clearly, $N\leq B$ as burst erasures are a special case of arbitrary erasures. Also, $T\geq B$ as otherwise non-zero rates are not feasible for a streaming code. Taken together with the delay-constraint $T$, the $\mathcal{C}(N,B,W)$ model specializes to the burst-erasure-only model in \cite{MartSunTIT04}, if one chooses $W=T+1$ and $N=1$. Furthermore, a rate upper bound on codes for this channel has also been derived in \cite{BadrPatilKhistiTIT17}. There are several other follow-up works that study various  low-delay communication schemes like \cite{LiKhistiGirodMultipleBursts}, \cite{TekinHYJ12}, \cite{LeongQH13}, \cite{AdlerC17} and references therein.

{\it In an independent, concurrent work \cite{FongStreamingCodes}, the authors prove the existence of rate-optimal streaming codes for all parameters. }

{\it Our Results:} In this work, we prove the tightness of the rate upper bound derived in \cite{BadrPatilKhistiTIT17} for codes that allow both burst erasures and isolated erasures, with a delay-constraint of $T$. We prove this by constructing explicit codes that meet the bound with equality. The family of codes that achieve the bound are based on linearized polynomials, and require a field-size exponential in $T$. For various range of parameters, we obtain rate-optimal codes that require lower field-size, which is of $O(T^2)$. In some cases, even linear and binary field-sizes suffice. We obtain rate-optimal streaming codes by reducing the problem to the design of linear block codes with certain properties. For the corner case of $B=N$, these codes specialize to MDS codes.

\section{Preliminaries}\label{sec:prelims}
%
%
%

\subsection{Streaming Capacity \cite{BadrPatilKhistiTIT17}}
In the context of streaming codes with decoding delay-constraint $T$, a rate $R$ is said to be achievable over $\mathcal{C}(N,B,W)$, if there exists a streaming code with delay parameter $T$ and rate $R$ that tolerates all the erasure patterns permitted by $\mathcal{C}(N,B,W)$. The supremum of all such rates is termed as the streaming capacity. In \cite{BadrPatilKhistiTIT17}, the authors obtain the following upper bound for $R$:

\begin{equation}\label{eq:rate_upperbound}
{R\leq \frac{T_{\text{eff}}-N+1}{B+T_{\text{eff}}-N+1}},
\end{equation}
where $T_{\text{eff}}=\min\{T,W-1\}$. We shall refer to $T_{\text{eff}}$ as the {\it effective delay}. Achievability of this rate bound is not known in general, except for a limited set of parameters. For the burst-alone case, i.e., $N=1$, Maximally Short (MS) codes (\cite{MartSunTIT04,MartTrotISIT07}) are known to meet the bound \eqref{eq:rate_upperbound}. For the other extreme case of $N=B$, Strongly-MDS convolutional codes \cite{GluesRoseSTIT06} are shown to be optimal in \cite{BadrPatilKhistiTIT17}. \cite{BadrKTAInfocom13} provides a family of codes with $R=0.5$ that meets \eqref{eq:rate_upperbound}. \cite{BadrPatilKhistiTIT17} shows the existence of near-optimal codes for all feasible parameters $B,N,T_{\text{eff}}$ that have a guaranteed rate of at least $\frac{T_{\text{eff}}-N}{B+T_{\text{eff}}-N}$. 

Let $[u,v]\triangleq\{u,u+1,\ldots,v\}$. We have the following lemma.

\begin{lem}\label{lem:sliding_window_to_teff}
	If there is a  streaming code $\mathscr{C}_{\text{str}}$ which permits recovery of any packet $\mathbf{x}[t]$ from erasure with a delay of at most $T_{\text{eff}}$, even in presence of either (i) a burst erasure (involving time $t$) of length at most $B$ or (ii) at most $N$ isolated erasures (again, including time $t$), then $\mathscr{C}_{\text{str}}$ tolerates all the erasure patterns permitted by $\mathcal{C}(N,B,W)$ with a delay-constraint $T$. 
\end{lem}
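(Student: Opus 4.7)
The plan is a direct windowing argument: for any erased packet $\mathbf{x}[t]$ in a $\mathcal{C}(N,B,W)$ erasure sequence, I would locate a sub-window around $t$ of width at most $W$ on which the induced erasure pattern is of one of the two types (i), (ii) appearing in the hypothesis of the lemma, and then apply the given recovery guarantee to decode $\mathbf{x}[t]$ within delay $T_{\text{eff}}\leq T$.

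Concretely, fix any erasure pattern admissible under $\mathcal{C}(N,B,W)$ and any $t$ at which $\mathbf{x}[t]$ is erased. Take the sub-window $\mathcal{W}_t \triangleq [t,\,t+T_{\text{eff}}]$, whose width $T_{\text{eff}}+1$ is at most $W$ by the definition $T_{\text{eff}}=\min\{T,W-1\}$. Because $\mathcal{W}_t$ sits inside an honest width-$W$ sliding window (for instance $[t,\,t+W-1]$), and the channel model guarantees that the erasures inside any such window are either a burst of length at most $B$ or at most $N$ arbitrarily placed erasures, the restriction of the erasures to $\mathcal{W}_t$ inherits the same structure: in the burst case it is a (possibly shorter) contiguous burst of length at most $B$, and in the other case it is at most $N$ arbitrary erasures. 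In either case, $t\in\mathcal{W}_t$ is itself part of the pattern since $\mathbf{x}[t]$ is erased. The hypothesis on $\mathscr{C}_{\text{str}}$ therefore applies directly and recovers $\mathbf{x}[t]$ by time $t+T_{\text{eff}}\leq t+T$, which is exactly the delay $\mathcal{C}(N,B,W)$ demands.

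The one subtle point, which I expect to be the main obstacle and to require the most care, is that the actual $\mathcal{C}(N,B,W)$ channel can contain many further erasures outside $\mathcal{W}_t$, whereas the hypothesis of the lemma is phrased for a ``clean'' setting in which the only erasures near $t$ form pattern (i) or (ii). I would dispose of this by induction on $t$: assuming that for every $t'<t$ the packet $\mathbf{x}[t']$ has been successfully recovered by time $t'+T_{\text{eff}}$, at time $t+T_{\text{eff}}$ the decoder effectively has access to the values of all packets preceding $\mathcal{W}_t$ and may treat them as side information; the only outstanding erasures affecting the decoding of $\mathbf{x}[t]$ then lie inside $\mathcal{W}_t$, which is exactly the scenario handled by the hypothesis. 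Making this ``local decoding'' step formal uses the causal, finite-memory convolutional structure of $\mathscr{C}_{\text{str}}$, and is the only place where the argument goes beyond the containment $\mathcal{W}_t \subseteq $ (width-$W$ window); once granted, the induction closes and the lemma follows.
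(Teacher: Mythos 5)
Your proposal is correct and follows essentially the same route as the paper: both use the window $[t,t+T_{\text{eff}}]$ of width $\leq W$ starting at the erased time, and both resolve the presence of earlier erasures by an induction that assumes all prior erased packets have already been recovered within their respective delay budgets (so they can be treated as known at decoding time), which reduces the situation to exactly the hypothesis (i) or (ii) of the lemma. The paper indexes the induction by the ordinal number of the erasure while you index by time $t$, but since erasures occur at distinct times these are the same induction.
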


\begin{proof}
	Consider any erasure pattern that arises from the $\mathcal{C}(N,B,W)$ model. Let $j\in \{1,2,\ldots\}, J\in\{2,3,\ldots\}$. Denote by $t_j$, the coordinate at which the $j^\text{th}$ erasure happened. Let the induction assumption be that, for all $j\leq (J-1)$,  $\mathbf{x}[t_j]$ is recovered by the `time' $(t_j+T_\text{eff})$, i.e., with a delay $T_\text{eff}\leq T$. Hence during the decoding of $\mathbf{x}[t_J]$, we can assume that, strictly before time $t_J$, there are no erasures occurring among the coordinates $[0,t_J+T_\text{eff}]$.
	
	Consider the window of width $W$ consisting of coordinates $[t_J,t_J+W-1]$. The coded packet $\mathbf{x}[t_J]$ is erased by assumption. By definition of the sliding window channel model, in the worst case, there will be either (i) $(B-1)$ erasures at coordinates $[t_J+1,t_J+B-1]$ or (ii) $(N-1)$ erasures within $[t_J+1,t_J+W-1]$. As $T_\text{eff}\leq (W-1)$, when restricted to the set of coordinates $[0,t_J+T_\text{eff}]$, $\mathscr{C}_\text{str}$ will be observing either a burst of length $\leq B$ including coordinate $t_J$ or $\leq N$ arbitrary erasures involving $t_J$ (note that the induction assumption removes all erasures occurring before time $t_J$). In either case, $\mathscr{C}_\text{str}$ will be able to recover $\mathbf{x}[t_J]$ by time $(t_J+T_\text{eff})$, with delay $T_\text{eff}\leq T$. 
	
	The base case of the induction can be shown to be true by taking $J=1$. Thus we have shown that for all $j\geq 1$, $\mathbf{x}[t_j]$ can recover from any erasure pattern permitted by $\mathcal{C}(N,B,W)$, with a delay of $T_\text{eff}\leq T$.
\end{proof}

Hence in order to construct streaming codes for $\mathcal{C}(N,B,W)$ with a delay parameter of $T$, it is enough to focus on streaming codes that can handle an erasure burst of length $B$ or $N$ isolated erasures, with a delay of at most $T_{\text{eff}}$. We drop the subscript from the notation $T_{\text{eff}}$ for brevity and refer to $T$ as the (effective) delay parameter. 

\subsection{Diagonal Interleaving}

Diagonal interleaving is a known technique that enables one to convert a block code into a convolutional code. It has been used in the context of streaming codes in works like \cite{MartSunTIT04}, \cite{MartTrotISIT07} to reduce the problem of designing rate-optimal streaming codes for burst-erasure-only case ($N=1$) to that of designing block codes having appropriate features. Consider an $[n,k]$ linear block code $\mathscr{C}$, where $n$, $k$ denote the code-length and dimension, respectively. We summarize the diagonal interleaving using Fig. \ref{fig:diag_interleaving}. In the figure, we consider a systematic encoder for $\mathscr{C}$ and hence take $x_i[t]=s_i[t]$ for $0\leq i\leq k-1$. In order to stress the point that the last $(n-k)$ symbols of a coded packet are parity symbols at any time $t$, we use the notation $p_j[t]\triangleq x_{k+j}[t]$, for $0\leq j\leq n-k-1$. As an example, consider the following parameters for $\mathscr{C}$; $n=5$, $k=3$ . The diagonal interleaving technique will result in a convolutional code as shown in Fig. \ref{fig:diag_interleaving_example}.

\begin{figure}[ht]
	\centering
	\captionsetup{justification=centering}
	\includegraphics[scale=0.5]{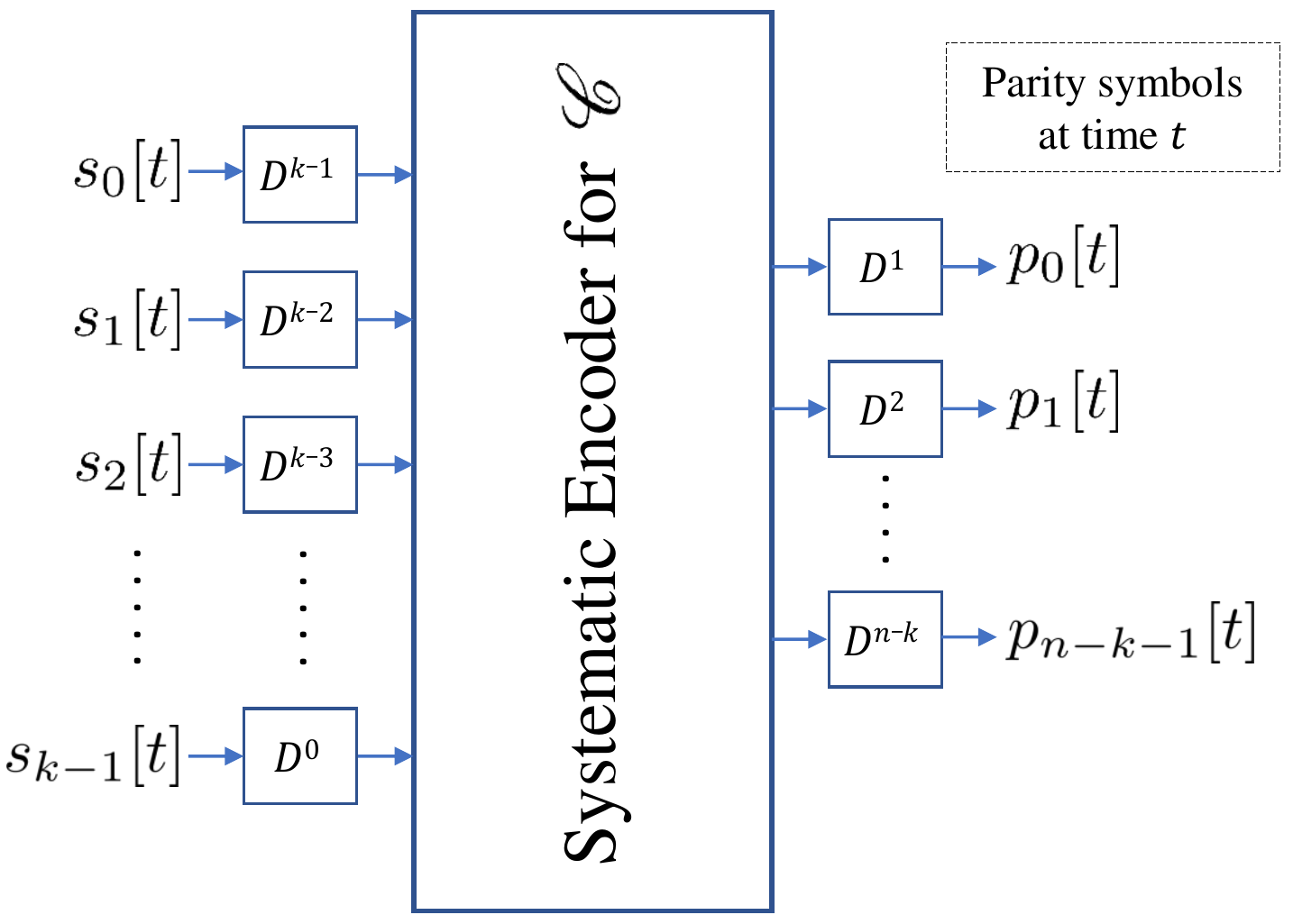}
	\caption{A depiction of the diagonal interleaving technique. Here $D^j$ denotes a delay of $j$ time units.}
	\label{fig:diag_interleaving}
\end{figure}

\begin{figure}[ht]
	\centering
	\captionsetup{justification=centering}
	\includegraphics[scale=0.5]{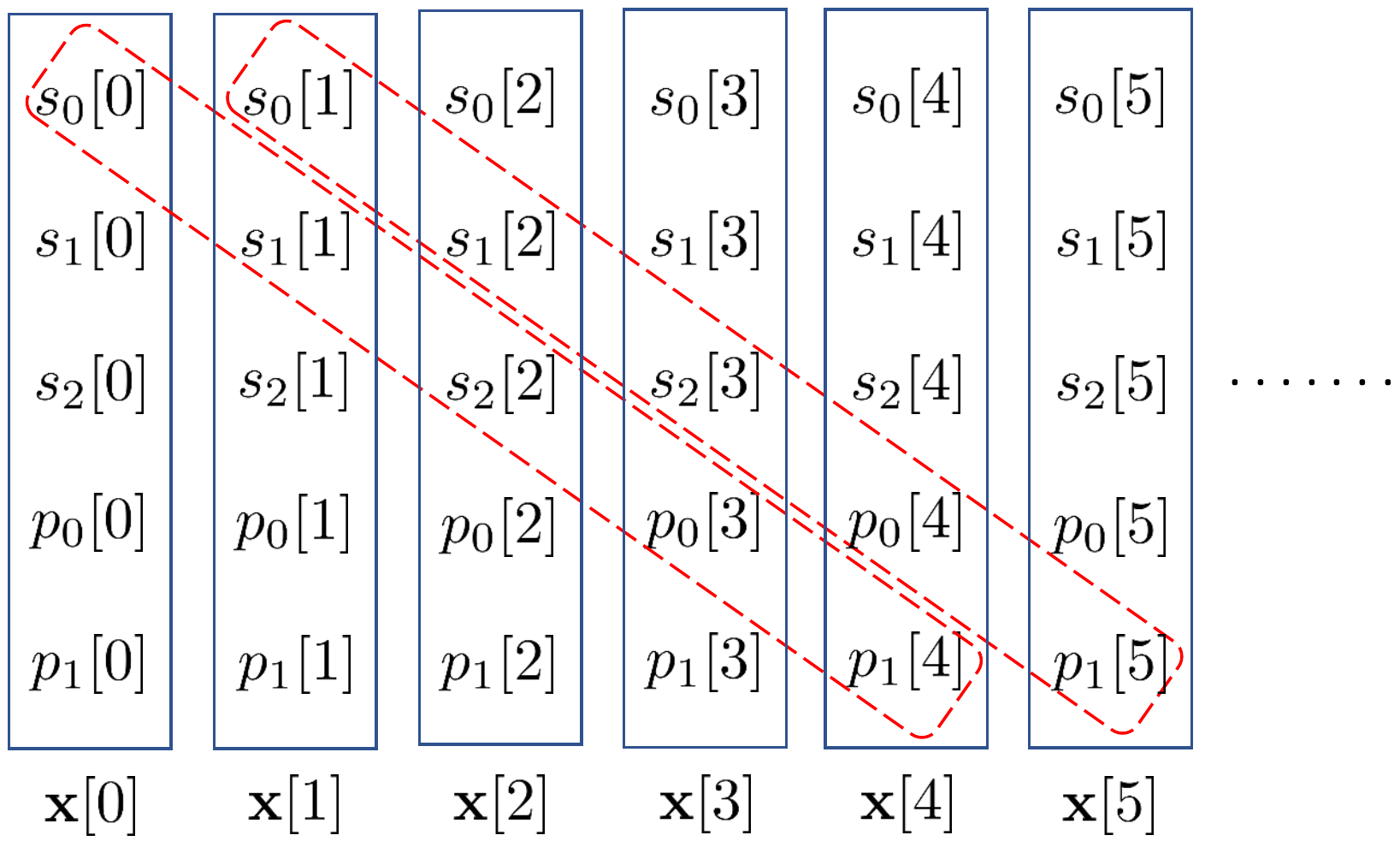}
	\caption{An illustration of the diagonal interleaving technique for a $[5,3]$ code. Each diagonal of the form ($s_0[t],s_1[t+1],s_2[t+2],p_0[t+3],p_1[t+4])$ is a codeword in $\mathscr{C}$, where $t\geq0$. For notational consistency, define $s_i[t]=p_i[t]\triangleq0$ for $t<0$.}
	\label{fig:diag_interleaving_example}
\end{figure}

	Consider an $[n,k]$ linear block code $\mathscr{C}$ over $\mathbb{F}_q$. Let $\mathbf{c}\triangleq(c_0,c_1,c_2, \ldots, c_{n-1})$ denote a codeword in $\mathscr{C}$.

	\begin{defn}
		For $i\in[0,n-1]$ and $\mathcal{I}\subseteq[0,n-1]$, the $i^{\text{th}}$ coordinate of $\mathscr{C}$ is said to be {\it recoverable} from ( the set of coordinates ) $\mathcal{I}$, if there exist $\lambda_j$'s chosen from $\mathbb{F}_q$ such that:
		\begin{equation*}
		c_i=\sum_{j\in\mathcal{I}}\lambda_jc_j \ \ \forall \mathbf{c}\in\mathscr{C}.
		\end{equation*}
	\end{defn}
	If $i\in\mathcal{I}$, it is straightforward to see that $i^{\text{th}}$ coordinate is recoverable from $\mathcal{I}$. For parameters $N$, $B$ as in the channel model $\mathcal{C}(N,B,W)$ and the delay parameter $T$, let $\Delta_i\triangleq\min\{i+T,n-1\}$.

\begin{defn}\label{defn:blockcode_for_sliding_window}
$\mathscr{C}$ is said to be {\it conforming to} $\mathcal{C}(N,B,W)$ with delay-constraint $T$, if both the following statements hold for all $i\in [0,n-1]$:
\bit
\item $i^{\text{th}}$  coordinate of $\mathscr{C}$ is recoverable from $[0,\Delta_i]\setminus\mathcal{N}$, for all $\mathcal{N}\subseteq[0,n-1]$ such that $|\mathcal{N}|\leq N$.
\item $i^{\text{th}}$  coordinate of $\mathscr{C}$ is recoverable from $[0,\Delta_i]\setminus \mathcal{B}$, where $\mathcal{B}=[u,v]$, for all $0\leq u\leq v\leq n-1$ such that $(v-u+1)\leq B$.
\eit
\end{defn}

One can show without much difficulty that, if there is an $[n,k]$ linear block code $\mathscr{C}$ which conforms to  $\mathcal{C}(N,B,W)$ with delay-constraint $T$, by applying diagonal interleaving, it will result in a streaming code $\mathscr{C}_{\text{str}}$ that can tolerate an erasure burst of length $B$ or $N$ isolated erasures, with delay-constraint $T$. By Lemma \ref{lem:sliding_window_to_teff}, $\mathscr{C}_{\text{str}}$ yields an achievable rate $\frac{k}{n}$ over $\mathcal{C}(N,B,W)$  with effective delay-constraint $T$. Hence from here onwards we restrict our attention to designing linear block codes that conform to $\mathcal{C}(N,B,W)$ with (effective) delay-constraint $T$ and have rate $R=k/n$ that meets \eqref{eq:rate_upperbound} with equality. We refer to these codes as rate-optimal linear block codes with delay-constraint $T$. 

Let $T=aB+\delta$, where $a\geq 0$ and $1\leq \delta\leq B$. If $a=0$, $\delta=B$ (as $B\leq T$).

\section{Rate-Optimal Linear Block Codes with Delay Constraint $T$ for the Case $\delta\geq (B-N)$}

\subsection{Construction-$A$}
Given the channel $\mathcal{C}(N,B,W)$ and effective delay constraint $T$, consider the parameters $n=(B+T-N+1)$ and $k=(T-N+1)$ for the linear block code $\mathscr{C}$ to be constructed over $\mathbb{F}_{q^2}$. Let $\delta\geq (B-N)$ and $\mathbf{m}\triangleq[m_0\ m_1\ \ldots\ m_{k-1}]$ be the $k$-length message vector to be mapped to the codeword $\mathbf{c}\triangleq(c_0, c_1,\ldots,c_{n-1})\in\mathscr{C}$. Consider an $[n_{\text{MDS}}=T+1,k_{\text{MDS}}=k,d_{\text{min}}=N+1]$ MDS code $\mathscr{C}_{\text{MDS}}$ over $\mathbb{F}_q\subseteq\mathbb{F}_{q^2}$, with a generator matrix $\mathbf{G}_{\text{MDS}}$ having all the entries from $\mathbb{F}_q\subseteq \mathbb{F}_q^2$. Let 
$\mathbf{c}'=[c'_0\ c'_1 \ \ldots \ c'_T]=\mathbf{m}\mathbf{G}_{\text{MDS}}$. Assign these $(T+1)$ code-symbols $\{c'_j\}_{j=0}^T$ to the coordinates $\{0,1,\ldots,T-1\}\cup\{n-1\}$ of $\mathbf{c}$. i.e., $c_i=c'_i$ for $0\leq i\leq T-1$ and $c_{n-1}=c'_{T}$. For the $(B-N)$ coordinates $\{c_{T+j} :0\leq j\leq(B-N-1)\}$, take $c_{T+j}=\underbrace{\alpha c_j+c_{j+B}+\ldots+c_{j+aB}}_\text{$(a+1)$ terms}$. Note that, since $j\leq (B-N-1)\leq \delta-1$, all the $(a+1)$ code-symbols taking part in the check-sum belong to the set of coordinates $[0,T-1]$. Here $\alpha$ can be any field element from the set $\mathbb{F}_{q^2}\setminus\mathbb{F}_q$.
Existence of $\mathbf{G}_{\text{MDS}}$ is guaranteed when $q\geq (T+1)$. Hence $\mathscr{C}$ can be constructed with a field-size of $O(T^2)$.

Let $N=2, B=4, T=10$. Hence $a=2, \delta=2$. The dimension of $\mathscr{C}$, $k=T-N+1=9$ and length, $n=B+T-N+1=13$. This example code is illustrated in Fig. \ref{fig:example_special_case}.
\begin{figure*}[ht]
	\centering
	\captionsetup{justification=centering}
	\includegraphics[scale=0.6]{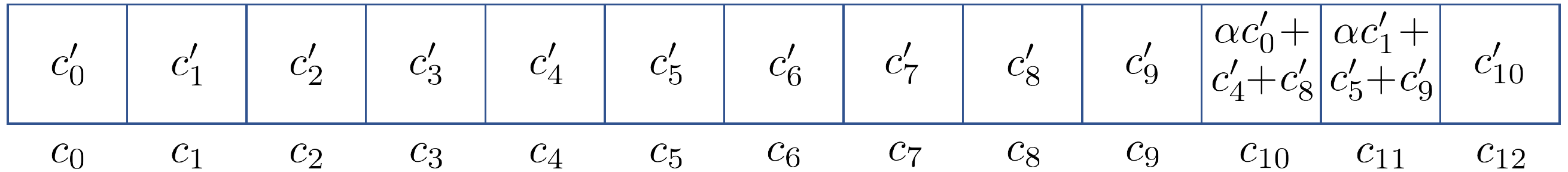}
	\caption{An illustration of construction-A for the parameters: $N=2, B=4, T=10$. Here the set $\{c'_i\}_{i=0}^{10}$ corresponds to code-symbols coming from an $[n_{\text{MDS}}=11,k_{\text{MDS}}=9]$ MDS code.}
	\label{fig:example_special_case}
\end{figure*}
\begin{proposition}\label{prop:del_gt_B_minus_N}
	Construction-$A$ gives a rate-optimal family of linear block codes with delay constraint $T$.
\end{proposition}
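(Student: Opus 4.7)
Since $n = B+T-N+1$ and $k = T-N+1$, the rate $k/n$ already attains equality in \eqref{eq:rate_upperbound}, so rate-optimality reduces to showing that $\mathscr{C}$ conforms to $\mathcal{C}(N,B,W)$ with delay-constraint $T$ in the sense of Definition~\ref{defn:blockcode_for_sliding_window}. The plan is to verify the two conforming conditions separately, exploiting the decomposition of $\mathscr{C}$ into an MDS ``backbone'' $\{c_0,\dots,c_{T-1},c_{n-1}\}$, which carries a codeword of distance $N+1$, plus the $B-N$ extra burst-parities $\{c_{T+j}\}_{j=0}^{B-N-1}$. The scalar $\alpha \in \mathbb{F}_{q^2}\setminus\mathbb{F}_q$ is chosen precisely so that the burst-parity relations remain $\mathbb{F}_{q^2}$-linearly independent of the $\mathbb{F}_q$-valued MDS parity checks whenever such independence is needed.

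For isolated erasures I split on $\Delta_i$. When $i \ge B-N$ one has $\Delta_i = n-1$, so the whole codeword is visible except for the $\le N$ erasures, and the MDS code corrects any $N$ erasures among its $T+1$ coordinates; once the MDS symbols are recovered, any erased burst-parity is then evaluated from its defining combination. When $i \in [0,B-N-1]$, we have $\Delta_i = i+T$ and $c_{n-1}$ lies outside the window, but the burst parities $c_T,\dots,c_{T+i}$ do not. I will argue that for any partition of the erasures between $[0,T-1]$ and $[T,T+i]$ the resulting $\mathbb{F}_{q^2}$-linear system uniquely determines $c_i$: if at most $N-1$ erasures lie in $[0,T-1]$, the MDS code shortened at $c_{n-1}$ still has distance $\ge N$ and alone does the job, while if all $N$ erasures lie in $[0,T-1]$ the missing independent equation is supplied by the (non-erased) burst parity at index $j=i$.

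For a burst $\mathcal{B}=[u,u+B-1]$ the strategy is to recover the erased symbols in order of increasing index; this is legitimate because $\Delta_i$ is non-decreasing and a previously recovered $c_j$ is, by construction, an $\mathbb{F}_{q^2}$-linear combination of symbols in $[0,\Delta_j]\setminus\mathcal{B} \subseteq [0,\Delta_i]\setminus\mathcal{B}$. The pivotal observation is that the $a+1$ positions $\{j, j+B, \ldots, j+aB\}$ appearing in the defining relation of $c_{T+j}$ are spaced $B$ apart, so they meet any burst of length $\le B$ in at most one position; hence every non-erased burst parity $c_{T+j}$ immediately recovers the unique erasure, if any, lying in its column. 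A counting argument then shows that the available burst parities peel off $(B-N) - |\mathcal{B}\cap [T,T+B-N-1]|$ of the $B$ erasures, leaving at most $N$ MDS-position unknowns, which are fixed up by $\mathscr{C}_{\text{MDS}}$. I will treat the sub-cases burst-inside-$[0,T-1]$, burst-overlapping-$[T,T+B-N-1]$, and burst-containing-$c_{n-1}$ separately, verifying in each that every burst position whose decoding requires $c_{n-1}$ has index $\ge B-N$ and therefore enjoys $\Delta_i = n-1$.

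The main obstacle I anticipate is the boundary situation where neither the MDS backbone nor the burst parities alone suffice, most acutely in the $\Delta_i<n-1$ sub-case of isolated erasures with all $N$ erasures in $[0,T-1]$. Here the shortened MDS code has distance exactly $N$, leaving a one-dimensional $\mathbb{F}_q$-kernel on the erased coordinates, and the extra burst-parity relation at $j=i$ has to collapse this kernel. The delicate step is to check that the kernel generator has a non-zero entry at position $i$, which follows from the MDS distance since otherwise a nonzero MDS codeword of weight $\le N$ would exist; granted this, the burst-parity equation reduces to $\alpha\cdot(\text{nonzero element of }\mathbb{F}_q) + (\text{element of }\mathbb{F}_q) = 0$, which is impossible because $\alpha\notin\mathbb{F}_q$. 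Extending this bookkeeping to cover the parity-overlapping sub-case of the burst argument is the most technical part of the verification.
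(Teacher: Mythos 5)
Your proposal is correct and follows the same overall decomposition as the paper: exhibit the code as an $[T+1, T-N+1, N+1]$ MDS backbone over $\mathbb{F}_q$ augmented with $B-N$ burst parities whose terms are spaced $B$ apart; count that each non-erased burst parity hits at most one burst erasure and therefore peels off exactly one unknown; and observe that the delay check for both the burst and the isolated-erasure case reduces to indices in $[0,B-N-1]$. For the one delicate step --- isolated erasures with $\Delta_i < n-1$ and all $N$ erasures inside $[0,T-1]$ --- the two arguments diverge in technique. The paper works with an explicit systematic generator matrix $G_1$, forms a $k\times k$ submatrix $G_1'$ using the column at position $T+i$, expands its determinant along that column, and concludes $\det(G_1') = \ast + \beta\,\det(G_2') \neq 0$ by the MDS property of $G_2$ and $\beta\in\mathbb{F}_{q^2}\setminus\mathbb{F}_q$. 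You instead argue via the kernel of the punctured MDS code on the erased set: MDS distance forces the kernel to be one-dimensional and spanned (up to $\mathbb{F}_{q^2}$-scaling) by an $\mathbb{F}_q$-vector $v_0$ with $v_{0,i}\neq 0$, and the burst-parity row evaluates on this vector to $\alpha v_{0,i} + (\text{element of }\mathbb{F}_q) \neq 0$. Both paths reduce to the same algebraic fact about $\alpha$, but your kernel formulation avoids the matrix bookkeeping and makes the role of the MDS minimum distance more transparent; the paper's determinant expansion is more explicit but requires tracking the systematic form and the row/column surgery that produces $G_2$. One small imprecision in your writeup: you should say a nonzero punctured-MDS codeword of weight strictly less than $N$ (equivalently, a full-MDS codeword of weight $\le N$, contradicting distance $N+1$), rather than ``weight $\le N$'' for the punctured code, since weight-$N$ codewords do exist there.
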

\begin{proof}
	{\it (Burst erasure recovery with delay constraint $T$)} Consider an erasure burst of length $B$. Let $\epsilon$ indicate the number of erasures occurred among the coordinates in $[T,T+B-N-1]$, as a part of the erasure burst. Thus, the set of non-erased symbols among the coordinates $[T,T+B-N-1]$ is given by $\{c_{i+T}:\epsilon\leq i\leq (B-N-1)\}$, where $c_{i+T}\triangleq\alpha c_i+c_{i+B}+\ldots+c_{i+aB}$. As $c_{i+T}$ is assumed to be a non-erased symbol, the burst of length $B$ ends at some coordinate $\leq (T+i-1)$. Note that, at most one code-symbol in the sum given by each $c_{i+T}$ can be a part of the burst erasure, as the code-symbols in the sum are chosen to be $B$ apart. Suppose that none of the code-symbols  $\{c_i,c_{i+B},\ldots,c_{i+aB}\}$, which constitute the sum given by $c_{i+T}$, are part of the erasure burst. As the erasure burst is of length $B$, the burst can start only on or after the coordinate $(i+aB+1)$. This means that the burst will end at a coordinate $\geq i+aB+B\geq T+i$ (as $\delta \leq B$). This is a contradiction, and hence each non-erased coordinate in the range $[T+\epsilon,T+B-N-1]$ will have lost precisely one code-symbol from the check-sum it contains. Each of these code-symbols can be recovered from the check-sums. Thus there will be $(B-N-\epsilon)$ coordinates recovered from the $(B-\epsilon)$ erasures occurred across coordinates $[0,T-1]\cup\{T-N+B\}$. The remaining number of erasures is $B-\epsilon-(B-N-\epsilon)=N$. As $\mathscr{C}$ restricted to the coordinates $[0,T-1]\cup\{T-N+B\}$ is an $N$ erasure correcting MDS code, the remaining erasures can be corrected as well.
	
	Furthermore, we need to show that all the erased coordinates in the erasure burst can be recovered with a delay of at most $T$. Towards this, we need to consider only erasure bursts involving at least one of the coordinates in $[0,B-N-1]$. For all the remaining coordinates, the delay constraint is trivially met as the last coordinate of $\mathscr{C}$ is $(T+B-N)$. Suppose $c_i:i\in[0,B-N-1]$ is erased (as part of a burst erasure of length $B$). Clearly, none of the code-symbols  $\{c_{i+B},c_{i+2B},\ldots,c_{i+aB},c_{T+i}\}$ will be part of the erasure burst and hence $c_i$ can be recovered with a delay of $T$.
	
	{\it (Recovery from $N$ arbitrary erasures with delay constraint $T$)} $\mathscr{C}$ can recover from any $N$ arbitrary erasures, as it is formed by adding $(B-N)$ dependent coordinates to an $N$ erasure correcting MDS code. As in the case of recovery from burst erasures, in order to show that delay constraints are met during the recovery, one needs to consider only those $N$-erasure patterns involving at least one of the code-symbols from the set of coordinates $[0,B-N-1]$. Suppose $c_i$ is erased for some $i\in[0,B-N-1]$. Among the set of $T$ coordinates given by $[0,i-1]\cup[i,T-1]\cup\{T+i\}$, there can possibly be $(N-1)$ more erasures. 
	
	If the coordinate $(T+i)$ is among the erased coordinates, there will be possibly $(N-1)$ erasures among the coordinates $[0,T-1]$, inclusive of the coordinate $i$. As these coordinates are part of an MDS code of dimension $k=T-N+1$, the code-symbol $c_i$ can be recovered with a delay $<T$.   
	 
    Now consider the remaining case that $(T+i)$ is not an erased coordinate. Let $G_1$ be the systematic $k\times (T+1)$-generator matrix for the code obtained by puncturing $\mathscr{C}$ to the coordinates $[0,T-1]\cup \{T+i\}$. For $0\leq j\leq T-N$, the $j^\text{th}$ column of $G_1$ is taken as $e_j\triangleq[\ \underbrace{0\ \ldots\ 0}_\text{$j$}\ \underbrace{1 \ 0\ \ldots 0}_\text{$(k-j)$}\ ]^\intercal$. We remark that all the elements of the sub-matrix obtained by restricting $G_1$ to the coordinates corresponding to $[0,T-1]$, belong to $\mathbb{F}_q\subseteq \mathbb{F}_{q^2}$.  Let $G_2$ be the $(T-N)\times (T-1)$ sub-matrix obtained from $G_1$ after removing columns corresponding to the coordinates $\{i,(T+i)\}$ of $\mathscr{C}$ and the $i^\text{th}$ row. Note that $G_2$ corresponds to the MDS generator matrix of a code of length $(T-1)$ and dimension $(T-N)$. The column of $G_1$ corresponding to the coordinate $(T+i)$ will have the form $[\underbrace{*\ \ldots\ *}_\text{$i$}\ \underbrace{\beta\ *\ \ldots\ *}_\text{$(k-i)$}]^\intercal$, where $*$ indicates elements from $\mathbb{F}_q\subseteq \mathbb{F}_q^2$ and $\beta\in \mathbb{F}_q^2\setminus\mathbb{F}_q$. Let $G_1'$ be an $(T-N+1)\times (T-N+1)$ sub-matrix of $G_1$ with $(T-N)$ columns from $[0,i-1]\cup[i+1,T-1]$ and $(T+i)$. After expanding the determinant of $G_1'$ based on the column corresponding to the coordinate $(T+i)$, we have $det(G_1')=*+\beta*det(G_2')$, where $*$ denotes field elements from $\mathbb{F}_q\subseteq\mathbb{F}_q^2$ and $G_2'$ is some $(T-N)\times (T-N)$ sub-matrix of $G_2$. As $G_2$ is an MDS generator matrix with all the entries from $\mathbb{F}_q$, $det(G_2')\neq 0\in\mathbb{F}_q$ and hence $det(G_1')\neq0$. Thus, $c_i$ can be recovered with a delay of $T$.
\end{proof}
\begin{cor}{(Binary field-size)}
When $N=1$ and $\delta\geq (B-1)$, $\mathscr{C}$ can be over $\mathbb{F}_2$.	
\end{cor}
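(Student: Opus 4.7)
The plan is to revisit Construction-$A$ specialized to $N=1$, and argue that both algebraic ingredients used in the construction — the constituent MDS code $\mathscr{C}_{\text{MDS}}$ and the scalar $\alpha$ — can be chosen over $\mathbb{F}_2$ while still delivering the two recovery guarantees established in Proposition \ref{prop:del_gt_B_minus_N}. With $N=1$ the required MDS code has parameters $[T+1,T,2]$, which is simply the single parity-check code and exists over $\mathbb{F}_2$ with the systematic generator $\mathbf{G}_{\text{MDS}}=[\,\mathbf{I}_T\mid \mathbf{1}\,]$. So I would set $q=2$ and use this code for the symbols $\{c'_j\}_{j=0}^T$, and then define the remaining parity coordinates $c_{T+j}=\alpha c_j+c_{j+B}+\cdots+c_{j+aB}$, $0\le j\le B-2$, taking $\alpha=1\in\mathbb{F}_2$. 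The resulting $\mathscr{C}$ is binary by construction; I just need to re-verify that the two decoding arguments still work with this restricted choice.

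For the burst-erasure case, the argument in the proof of Proposition \ref{prop:del_gt_B_minus_N} only uses that each non-erased check-sum $c_{T+i}$ contains at most one erased term (which can then be solved for), together with MDS decoding on the restricted support $[0,T-1]\cup\{n-1\}$. Neither step requires $\alpha\notin\mathbb{F}_q$; any invertible $\alpha$ suffices, and in particular $\alpha=1$ does. For the $N=1$ isolated-erasure case, only a single symbol $c_i$ is erased. If $i\ge B-1$, then $c_i$ is one of the MDS coordinates and is recovered within delay $T$ from the single parity check on $[0,T-1]\cup\{n-1\}$, exactly as in the Proposition. If $i\in[0,B-2]$, then $c_{T+i}$ is necessarily available (only one symbol is erased), and the defining relation rearranges to $\alpha c_i=c_{T+i}-c_{i+B}-\cdots-c_{i+aB}$, with the sum interpreted as empty in the boundary case $a=0$, $\delta=B$. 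Since $\alpha=1\ne 0$ in $\mathbb{F}_2$, this recovers $c_i$ with delay at most $T$. The determinant/cofactor argument in the Proposition — the only place where $\alpha\notin\mathbb{F}_q$ was actually invoked — is entirely bypassed because there are no additional erasures to contend with when $N=1$.

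The main obstacle is merely bookkeeping: checking that each step of Proposition \ref{prop:del_gt_B_minus_N} survives the substitution $q=2$, $\alpha=1$, and noting that the hypothesis $\delta\ge B-1$ is exactly $\delta\ge B-N$, so Construction-$A$ is applicable. Once one observes that a single arbitrary erasure in the range $[0,B-2]$ leaves a single unknown in a single usable check-sum, it is clear that any non-zero coefficient works, and the single-parity-check code needed for the MDS layer is already binary. Hence $\mathscr{C}$ can be realized over $\mathbb{F}_2$.
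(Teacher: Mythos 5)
Your proposal is correct and supplies essentially the argument the paper leaves implicit. You correctly pinpoint that the only place Construction-$A$ genuinely needs $\alpha\in\mathbb{F}_{q^2}\setminus\mathbb{F}_q$ is the determinant/cofactor argument handling an isolated-erasure pattern that erases both $c_i$ (with $i\in[0,B-N-1]$) and additional MDS coordinates, and that with $N=1$ this situation never arises because there is at most one erasure; you also correctly note the $[T+1,T,2]$ MDS component is the binary single-parity-check code. That is exactly why the corollary holds, so your proof matches the intended one.
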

\begin{remark}
	When $B=N$, $\mathscr{C}$ reduces to an $[n=T+1,k=T-B+1]$ MDS code.
\end{remark}
\begin{remark}(Linear field-size for $N=(B-1)$)
For $N=(B-1)$, if the coordinates $0$ and $T$ are swapped in $\mathscr{C}$, with $\alpha\neq 0\in\mathbb{F}_q$, the resultant code will be rate-optimal with delay constraint $T$. In this case, field-size requirement will be $\geq (T+1)$.
\end{remark}
\section{Rate-Optimal Linear Block Codes with Delay Constraint $T$ for all Feasible Parameters: $\delta,B,N,T$}

\subsection{Linearized Polynomials}

A linearized polynomial  \cite{LidlFiniteFields} of $q$-degree $(k-1)$ is a polynomial of the form $f(x)=\sum_{i=0}^{k-1} m_ix^{q^i}$, where $m_i \in \mathbb{F}_{q^m}$ and $m_{k-1}\neq 0$. It satisfies the property that: $f(a_1x_1+a_2x_2)=a_1f(x_1)+a_2f(x_2)$, where $a_i\in \mathbb{F}_q$, $x_i\in \mathbb{F}_{q^m}$. Furthermore, a linearized polynomial of $q$-degree $(k-1)$ can be uniquely determined from evaluations at $k$ points $\{\theta_i\}_{i=0}^{k-1} \subseteq \mathbb{F}_{q^m}$, which are linearly independent over $\mathbb{F}_{q}$. Gabidulin codes \cite{Gab} are constructed based on linearized polynomial evaluations.

\subsection{Construction-B}

Given the channel $\mathcal{C}(N,B,W)$ and effective delay constraint $T$, consider the parameters $n=(B+T-N+1)$, $k=(T-N+1)$ for the linear block code $\mathscr{C}$ to be constructed over $\mathbb{F}_{q^m}$. The $k$ message symbols  $\{m_i\}_{i=0}^{k-1}$ will be taken as coefficients of a linearized polynomial with $q$-degree $(k-1)$ and evaluated at $(T+1)$ points $\{\theta_i\}_{i=0}^T$. Here $\theta_i\in \mathbb{F}_{q^m}$ and  $\{\theta_i\}_{i=0}^T$ is a collection of independent field elements over $\mathbb{F}_q\subseteq \mathbb{F}_{q^m}$. Let ${c'_0,c'_1,\ldots,c'_T}$ be the code-symbols (evaluations) thus obtained. Assign these code-symbols to the coordinates $\{0,1,\ldots,T-1\}\cup\{n-1\}$ of $\mathbf{c}$. i.e., $c_j=c'_j$ for $0\leq j\leq T-1$ and $c_{n-1}=c'_{T}$. Let $\delta'\triangleq \min\{\delta,(B-N)\}$. For coordinates $0\leq j\leq \delta'-1$, take $c_{T+j}=c_j+c_{j+B}+\ldots+c_{j+aB}$. For coordinates in the range $\delta'\leq j\leq (B-N)-1$, $c_{T+j}=c_j+c_{B+j}+\ldots+c_{(a-1)B+j}+\gamma_{j,0}c_{aB}+\gamma_{j,1}c_{aB+1}+\ldots+\gamma_{j,\delta-1}c_{aB+\delta-1}$. Here $\{\gamma_{i,j}\}$'s are selected in such a way that the $(B-N-\delta')\times\delta$ matrix of the form:
\begin{align*}
\boldsymbol{\Gamma} &\triangleq 
\begin{bmatrix}
\gamma_{\delta',0} & \gamma_{\delta',1} & \ldots & \gamma_{\delta',\delta-1}\\           
\gamma_{\delta'+1,0} & \gamma_{\delta'+1,1} & \ldots & \gamma_{\delta'+1,\delta-1} \\
\vdots & & & \vdots\\
\gamma_{B-N-1,0} & \gamma_{B-N-1,1} & \ldots & \gamma_{B-N-1,\delta-1}
\end{bmatrix}\label{eq:ConstrB_coeff},
\end{align*}
is Cauchy, with all the entries belonging to $\mathbb{F}_q\subseteq\mathbb{F}_{q^m}$. $\boldsymbol{\Gamma}$ can be constructed explicitly (for example, see \cite[Ch.~5]{Roth2006}) for $q\geq (\delta + B-N-\delta)=(B-N) $.  The evaluation points $\{\theta_i\}$ can be chosen as follows. Let $\hat{\beta}$ denote a primitive element of the extension field $\mathbb{F}_{q^{T+1}}$. A natural candidate for $\theta_i$, for $0\leq i\leq T$ will be $\theta_i=\hat{\beta}^{i-1}$. Hence construction-B is possible with a field-size $\geq q^{(T+1)}$, where $q\geq (B-N)$.

\begin{proposition}\label{prop:all_params}
	Construction-$B$ gives a rate-optimal family of linear block codes, for all feasible parameters $\delta,B,N,T$.
\end{proposition}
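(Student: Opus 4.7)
The plan is to lift the two recoverability requirements of Definition~2.2 to a linear-algebra question over $\mathbb{F}_q$ by exploiting the linearized-polynomial viewpoint. For each coordinate $i \in [0,n-1]$ I would identify a ``virtual evaluation point'' $V_i \in \mathbb{F}_{q^m}$ such that $c_i = f(V_i)$: namely, $V_i = \theta_i$ for $i \in [0,T-1]$, $V_{n-1} = \theta_T$, and for each parity position $T+j$ the virtual point $V_{T+j}$ is the explicit $\mathbb{F}_q$-linear combination of the $\theta_\ell$'s prescribed by the check-sum. Since $f$ is $\mathbb{F}_q$-linear, the statement ``$c_i$ is recoverable from $\{c_j : j \in \mathcal{S}\}$'' is equivalent to $V_i$ lying in the $\mathbb{F}_q$-span of $\{V_j : j \in \mathcal{S}\}$, and since $\{\theta_0, \ldots, \theta_T\}$ is an $\mathbb{F}_q$-basis of a subspace of $\mathbb{F}_{q^m}$, this reduces to a concrete rank condition on coefficient vectors in $\mathbb{F}_q^{T+1}$.

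For the burst part, I would mimic the template of Construction-A. Given a burst of length $B$ occupying $[p,p+B-1]$, let $\epsilon$ count the burst positions in the parity block $[T, T+B-N-1]$. Each surviving type-1 parity ($0 \le j < \delta'$) has its summands spaced $B$ apart and therefore loses at most one summand to the burst, recovering one systematic erasure in $[0,T-1]$; after doing this for all surviving type-1 parities I would be left with at most $N$ residual erasures in the punctured code on $[0,T-1] \cup \{n-1\}$, which forms a Gabidulin code with parameters $[T+1,\,T-N+1,\,N+1]$ and so corrects them. The new phenomenon absent from Construction-A is that when $\delta < B-N$, the type-2 parities ($\delta' \le j \le B-N-1$) contain the $\delta$ consecutive tail positions $[aB, aB+\delta-1]$, and a burst straddling the tail can wipe out several such summands simultaneously together with some of the type-2 parities themselves. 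Here I would invoke the Cauchy property of $\boldsymbol{\Gamma}$: every square submatrix of a Cauchy matrix is non-singular, so the coefficient vectors arising from the surviving type-2 parities together with the surviving tail systematics remain $\mathbb{F}_q$-independent in the appropriate subspace and recover the affected tail symbols. Meeting the delay constraint itself is immediate from the construction: for $i < B-N$ the relevant parity $c_{T+i}$ lies exactly at position $\Delta_i = i+T$.

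For the $N$-arbitrary case, the restriction of $\mathscr{C}$ to $[0,T-1] \cup \{n-1\}$ is the Gabidulin code above and corrects any $N$ erasures; when $i \ge B-N$ the delay window $[0,\Delta_i] = [0,n-1]$ contains this entire punctured code and recovery is automatic. For $i < B-N$ the window omits the position $n-1$ (and thus the basis element $\theta_T$), so I would substitute non-erased parities within $[T, T+i]$ for the missing evaluation, and show that the resulting collection of coefficient vectors together with the surviving systematic $\theta_\ell$'s is $\mathbb{F}_q$-linearly independent and has span containing $V_i$; the invertibility that powers this step is again furnished by the Cauchy structure of $\boldsymbol{\Gamma}$, playing the role that the single element $\alpha \in \mathbb{F}_{q^2}\setminus\mathbb{F}_q$ played in Construction-A. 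The main obstacle is precisely the regime $\delta < B-N$, in which the consecutive tail block makes every type-2 check-sum simultaneously vulnerable to a length-$B$ burst: the clean ``each surviving parity loses at most one summand'' argument fails, and one must enumerate carefully how many tail summands and how many type-2 parities a single burst can attack, verifying in every admissible configuration that the total non-singularity of Cauchy submatrices---the exact property for which $\boldsymbol{\Gamma}$ was chosen---leaves enough $\mathbb{F}_q$-independent virtual evaluations to recover each $V_i$ within delay $T$.
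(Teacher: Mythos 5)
Your framework---viewing each coordinate as an evaluation $c_i = f(V_i)$ of a linearized polynomial at a ``virtual point'' $V_i \in \mathbb{F}_{q^m}$ and reducing recoverability to $\mathbb{F}_q$-linear-algebra on the coefficient vectors over $\{\theta_0,\ldots,\theta_T\}$---is exactly the viewpoint the paper uses. The type-1/type-2 split of the parities, the role of the Gabidulin $[T+1,T-N+1]$ code on $[0,T-1]\cup\{n-1\}$, and the identification of the Cauchy property of $\boldsymbol{\Gamma}$ as the engine for the $\delta<B-N$ regime are all correct and match the paper's proof.

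However, there are two issues. First, you over-engineer the $N$-arbitrary-erasures part. You propose to ``substitute non-erased parities within $[T,T+i]$ for the missing evaluation'' and invoke the Cauchy structure to prove independence. The paper's argument is considerably simpler and uses no Cauchy machinery at all: for erased $c_i$ with $i<B-N$ it restricts attention to the $T$ coordinates $\mathcal{R}_i = [0,i-1]\cup[i+1,T-1]\cup\{T+i\}$, whose virtual points are $A_i=\{\theta_\ell : \ell\in[0,T-1]\setminus\{i\}\}$ together with $\phi_i = \theta_i + \sum_{\theta\in B_i}\theta$. Because $\theta_i$ appears in $\phi_i$ but in no element of $A_i$, \emph{any} $k$ of these $T$ points are $\mathbb{F}_q$-independent, so $N-1$ further erasures inside $\mathcal{R}_i$ still leave $k$ independent evaluations. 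Only the parity at position $T+i$ is ever needed, and the Cauchy property never enters.

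Second, and more importantly, the burst part for $\delta<B-N$---which you correctly flag as ``the main obstacle''---is precisely the content of the paper's Appendix A, and your proposal defers it entirely (``one must enumerate carefully how many tail summands and how many type-2 parities a single burst can attack, verifying in every admissible configuration \ldots''). This enumeration is not a small finishing step: it occupies the bulk of the proof, involves the parameter $\epsilon = |\mathcal{B}\cap[T,T+B-N-1]|$, splits into the sub-regimes $(B-\epsilon)\le N+\delta$ and $(B-\epsilon)> N+\delta$, requires tracking how many tail symbols $c_{aB+j}$ become known by time $T+\delta-1$ via the surviving type-1 parities, and requires a counting argument (the paper's Table~I) showing that in each case the surviving type-2 check-sums contribute exactly $B-N-\epsilon$ evaluations that are independent over $\mathbb{F}_q$ of the $T-B+\epsilon+1$ surviving systematic points. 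Your interim claim that after applying surviving type-1 parities one is ``left with at most $N$ residual erasures'' is false in this regime (with $\epsilon=0$ the type-1 parities recover only $\delta<B-N$ symbols, leaving $B-\delta>N$ residuals), so the Gabidulin code alone does not finish the job and the deferred enumeration cannot be bypassed. The proposal therefore has the correct skeleton but a genuine gap where the proof's actual difficulty lies.
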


\begin{proof}
	
	{\it (Recovery from $N$ arbitrary erasures with delay constraint $T$)} Similar to the scenario in Proposition \ref{prop:del_gt_B_minus_N}, $\mathscr{C}$ is $N$-erasure correcting, as $\mathscr{C}$ is constructed by adding $(B-N)$ dependent coordinates to an $N$-erasure correcting MDS code (obtained from linearized polynomial evaluations). Hence we only need to show that delay requirements are met for arbitrary erasures involving at least one of the coordinates in $[0,(B-N-1)]$. For $i\in[0,(B-N-1)]$, let $i$ be an erased coordinate. Consider the set of coordinates $\mathcal{R}_i\triangleq[0,i-1]\cup[i+1,T-1]\cup\{T+i\}$. The corresponding code-symbols will be evaluations of a linearized polynomial at the evaluation points (field elements): $A_i\cup\{\theta_i+\sum_{\theta\in B_i}{\theta}\}$, where $A_i\triangleq\{\theta_0,\theta_1,\ldots,\theta_{i-1},\theta_{i+1},\ldots,\theta_{T-1}\}$, $B_i\triangleq \{\theta_i,\theta_{B+i},\ldots,\theta_{(a-1)B+i},\theta_{aB+i}\}$, if $0\leq i\leq \delta'-1$ or 
	$B_i\triangleq\{\theta_i,\theta_{B+i},\ldots,\theta_{(a-1)B+i},\gamma_{i,0}\theta_{aB},$
	$\gamma_{i,1}\theta_{aB+1},\ldots,$ 
	$\gamma_{i,\delta-1}\theta_{aB+\delta-1}\}$, if $\delta'\leq i\leq (B-N-1)$. 
	
	The set of $(T-1)$ evaluation points $A_i$ is an independent set over $\mathbb{F}_q$ by construction. Adding $\{\theta_i+\sum_{\theta\in B_i}{\theta}\}$ to $A_i$ will not add any dependency, as $\theta_i\notin A_i$. Thus, access to evaluations at any $k=(T-N+1)$ out of the $T$ points in $A_i\cup \{\theta_i+\sum_{j\in B_i}{\theta_j}\}$, will be sufficient to recover all the $k$ message symbols. This essentially implies that even if there are $(N-1)$ erasures among the coordinates in $\mathcal{R}_i$, $c_i$ can still be corrected with a delay of at most $T$.
	
	{\it (Burst erasure recovery with delay constraint $T$)} If $\delta\geq (B-N)$, proof exactly follows as in the case of Proposition \ref{prop:del_gt_B_minus_N}. For the remaining case of $\delta<(B-N)$, proof is moved to Appendix \ref{app:proof_all_params_case_del_lt_b_minu_n}.
\end{proof}

\begin{example}\normalfont\label{eg:general_case_example}
	Consider the parameters $B=7$, $N=2$, $\delta=3$, $T=10$. Hence for $\mathscr{C}$, we have the parameters $n=(B+T-N+1)=16$ and $k=(T-N+1)=9$. Also, $\delta=\delta'=3$. Let $\hat{\alpha}$ be a primitive element of $\mathbb{F}_{5}$ and
	\begin{align*}
	\boldsymbol{\Gamma} &= 
	\begin{bmatrix}
	1 & \hat{\alpha} & \hat{\alpha}^2\\           
	1 & \hat{\alpha}^2 & \hat{\alpha}^4
	\end{bmatrix}.
	\end{align*}
	Consider the set of $(T+1)=11$ field elements $\{\theta_i\}_{i=0}^{10}$, which forms an independent set over $\mathbb{F}_5$. For example, let $\hat{\beta}$ be a primitive element of $\mathbb{F}_{5^m}$, where $m\geq (T+1)=11$. Then, $\theta_i$ can be chosen as $\hat{\beta}^i$.
	
	Let $\{m_i\}_{i=0}^8$ be the set of $k=9$ message symbols drawn from $\mathbb{F}_{5^m}$. Consider the corresponding linearized polynomial $f(x)=\sum_{i=0}^{8} m_ix^{5^i}$. The code-symbols will be as follows: $c_i=f(\theta_i)$, for $0\leq i\leq (T-1)=9$, $c_{10}=c_0+c_7=f(\theta_0+\theta_7)$, $c_{11}=c_1+c_8=f(\theta_1+\theta_8)$, $c_{12}=c_2+c_9=f(\theta_2+\theta_9)$, $c_{13}=c_3+c_7+\hat{\alpha} c_8+\hat{\alpha}^2 c_9=f(\theta_3+\theta_7+\hat{\alpha} \theta_8+\hat{\alpha}^2 
	\theta_9)$, $c_{14}=c_4+c_7+\hat{\alpha}^2 c_8+\hat{\alpha}^4 c_9=f(\theta_4+\theta_7+\hat{\alpha}^2 \theta_8+\hat{\alpha}^4 
	\theta_9)$, $c_{15}=f(\theta_{10})$.
	
	Consider the recovery of $\mathscr{C}$ from $N=2$ arbitrary erasures, with a delay at most $10$. Clearly, $\mathscr{C}$ can tolerate $2$ erasures. This is because, it is constructed by adding $(B-N)=5$ additional parity checks (at coordinates $10$, $11$, $12$, $13$ and $14$) to a $2$-erasure correcting $[11,9]$-MDS code. As the last coordinate of $\mathscr{C}$ is $15$, delay constraint of $10$ will be trivially met during the recovery of any $c_j: j\geq 5$, which is a part of $2$ arbitrary erasures. Hence we need to consider only those $2$-erasure patterns which involve at least one coordinate from $[0,4]$. Suppose $c_1$ is erased. Consider the $T=10$ coordinates, $\mathcal{R}_1=\{0\}\cup[2,9]\cup\{11\}$. The corresponding $10$ evaluation points $\{\theta_0,\theta_2,\theta_3,\ldots,\theta_9,(\theta_1+\theta_8)\}$ form an independent set over $\mathbb{F}_5$. Therefore, even if one more coordinate is lost (since $N=2$, and coordinate $1$ is already assumed to be lost) from $\mathcal{R}_1$, there will still be $k=9$ independent evaluation points available from the non-erased coordinates in $\mathcal{R}_1$. As the last coordinate in $\mathcal{R}_1$ is $(T+1)=11$, $c_1$ can be recovered with a delay at most $10$. Similar arguments hold when $c_0$, $c_2$, $c_3$ or $c_4$ is part of a $2$-erasure pattern.
	
	For the burst erasure case, consider consecutive erasures of length $B=7$, which erases the set of coordinates $[u,v]$, where $0\leq u\leq v\leq (n-1)$ and $v=u+B-1$. Let $\epsilon$ denote the number of coordinates in $[u,v]$ that intersect $[T,T+B-N-1]=[10,14]$. Clearly, $0\leq \epsilon\leq (B-N)=5$.
	
	\bit
	\item  $\epsilon=5$: If the burst erasure is such that $\epsilon=(B-N)=5$, the $[11,9]$-MDS code across coordinates $[0,9]\cup\{15\}$ will have $B-(B-N)=N=2$ coordinates erased and hence all the erasures can be corrected. 
	
  \item	$\epsilon=0$: Each symbol from $[10,14]$ will recover one unique erased code-symbol. Let $[u,u+6]\subseteq[0,9]$ denote the $B=7$ coordinates involved in the burst erasure. Clearly, there exists $j_0,j_1,\ldots,j_{4} \in [u,u+6]$ such that $j_\ell \mod B=\ell$, for $0\leq \ell\leq B-N-1=4$. Code-symbol at coordinate $10+\ell$ will be recovering precisely the corresponding $j_\ell$. 	
  
\item 	$\epsilon=1$: The erased coordinates belong to $[4,10]$. As $c_{11}=c_1+c_8$ and $c_{12}=c_2+c_9$, these two code-symbols recover coordinates $8$ and $9$, respectively. From $c_{13}(=c_3+c_7+\hat{\alpha} c_8+\hat{\alpha}^2 c_9)$ and $c_{14}(=c_4+c_7+\hat{\alpha}^2 c_8+\hat{\alpha}^4)$, after removing the interference from the known code-symbols $\{c_3,c_8,c_9\}$, we obtain the sums $\hat{c}_{13}=c_7$ and $\hat{c}_{14}=c_4+c_7$, respectively. These two sums essentially yield the symbols $c_4$ and $c_7$.
	
\item 	$\epsilon=2$: The erased coordinates are in $[5,11]$. As $c_{12}=c_2+c_9$, it recovers coordinate $9$. After removing the known code-symbols $\{c_3,c_4,c_9\}$, the check-sums provided by $c_{13}$ and $c_{14}$ result in $\hat{c}_{13}=c_7+\hat{\alpha} c_8$ and $\hat{c}_{14}=c_7+\hat{\alpha}^2 c_8$, respectively. These sums would recover the code-symbols $c_7$ and $c_8$ (uses the Cauchy property of $\mathbf{\Gamma}$).
	
\item 	$\epsilon=3$: Here the erased coordinates are from the set $[6,12]$. After removing the interference from the known code-symbols $\{c_3,c_4\}$, $\hat{c}_{13}=c_7+\hat{\alpha} c_8+\hat{\alpha}^2 c_9=f(\theta_7+\hat{\alpha} \theta_8+\hat{\alpha}^2)$, $\hat{c}_{14}=c_7+\hat{\alpha}^2 c_8+\hat{\alpha}^4 c_9=f(\theta_7+\hat{\alpha}^2 \theta_8+\hat{\alpha}^4 
	\theta_9)$. From the Cauchy property of  $\mathbf{\Gamma}$, the evaluation points 
	$\{\theta_7+\hat{\alpha} \theta_8+\hat{\alpha}^2 
	\theta_9, \theta_7+\hat{\alpha}^2 \theta_8+\hat{\alpha}^4 
	\theta_9\}$ are both independent. These evaluation points along with $\{\theta_0,\theta_1,\ldots,\theta_5,\theta_{10}\}$ form an independent set of cardinality, $k=9$.
	
\item  $\epsilon=4$: The erased coordinates are in $[7,13]$. After removing the interference from the known code-symbols $\{c_4\}$, $\hat{c}_{14}=c_7+\hat{\alpha}^2 c_8+\hat{\alpha}^4 c_9=f(\theta_7+\hat{\alpha}^2 \theta_8+\hat{\alpha}^4 
	\theta_9)$. $\theta_7+\hat{\alpha}^2 \theta_8+\hat{\alpha}^4 
	\theta_9$ along with $\{\theta_0,\theta_1,\ldots,\theta_6,\theta_{10}\}$ result in an independent set of size $k=9$.
	\eit
Thus, for every $\epsilon$, the $(5-\epsilon)$ non-erased coordinates $[10+\epsilon,14]$ will provide one linearized polynomial evaluation each. These $(5-\epsilon)$ evaluations along with the  $11-(7-\epsilon)$ evaluations corresponding to the non-erased coordinates in $[0,10]\cup\{15\}$, result in linearized polynomial evaluations over $9$ independent evaluation points and hence all the erased symbols can be recovered.

In order to show that the delay constraint of $T=10$ is met during the recovery from burst erasures, we need to consider only the burst erasures involving at least one of the coordinates $[0,B-N-1]=[0,4]$. Suppose $c_i$, where $i\in[0,2]$, is part of a burst-erasure. The coordinates $\{7+i, 10+i\}$ will be non-erased, as they are at least $B=7$ apart from coordinate $i$. Hence using the check-sum provided by $c_{10+i}$, the code-symbol $c_i$ can be recovered, with a delay of $10$ time units. 

For $c_i$, where $i\in[3,4]$, which is part of a burst erasure, the symbol $c_{10+i}$ will be non-erased. There are three possible cases for $(i,\epsilon)$; $(3,0)$, $(4,0)$ and $(4,1)$. If $\epsilon=0$, the code-symbols $\{c_{7},c_8,c_{9}\}$ will be known by the time $(10+i)$. After removing the interference from the known symbols at coordinates $\{7,8,9\}$, $c_{10+i}$ will yield the code-symbol $c_i$. If $\epsilon=1$, we have already seen that, $c_4$ will be recovered at time $14$. 
\end{example}
\appendices
\section{Proof of Proposition \ref{prop:all_params}: Burst Erasures, Case: $\delta<(B-N)$}\label{app:proof_all_params_case_del_lt_b_minu_n}

	Consider the set of erased coordinates, $\mathcal{B}\triangleq[u,v]$, where $0\leq u\leq v\leq n-1$ and $v-u+1= B$. Let $|\mathcal{B}\cap[T,T+B-N-1]|=\epsilon$. 
	\ben
	\item	$\underline{\epsilon=(B-N)}$: The number of erasures across the coordinates $\{0,T-1\}\cup\{T-N+B\}$ will be $B-(B-N)=N$, which can be corrected. 
	
	\item $\underline{\epsilon=0}$: Each code-symbol from the set $\{c_T,c_{T+1},\ldots,c_{T+\delta-1}\}$ will correct one erased symbol (argument similar to that in the proof of Proposition \ref{prop:del_gt_B_minus_N}). This essentially means that all the code-symbols which are part of the check-sums given by $\{c_T,c_{T+1},\ldots,c_{T+\delta-1}\}$, are known by time $T+\delta-1$. In particular, we have $\{c_{aB},c_{aB+1},\ldots,c_{aB+\delta-1}\}$ known by time $T+\delta-1$, irrespective of whether some of them are erased or not. From each code-symbol $c_{T+j}\triangleq c_j+c_{B+j}+\ldots+c_{(a-1)B+j}+\gamma_{j,0}c_{aB}+\gamma_{j,1}c_{aB+1}+\ldots+\gamma_{j,\delta-1}c_{aB+\delta-1}$, for $\delta\leq j\leq B-N-1$, the interference from  $\{c_{aB},c_{aB+1},\ldots,c_{aB+\delta-1}\}$ can be canceled. Hence $c_{T+j}$'s can be thought of as simply, $\hat{c}_{T+j}=c_j+c_{B+j}+\ldots+c_{(a-1)B+j}$. For $0\leq i\leq B-N-1$, let $\ell_i$ be the unique coordinate $\in [u,v]$ such that $\ell_i\mod B=i$. The check-sum provided by the code-symbol at coordinate $(T+i)$ will be recovering precisely the erased symbol $c_{\ell_i}$. Thus in total, the set of symbols $\{c_T,c_{T+1},\ldots,c_{T+B-N-1}\}$ will correct $(B-N)$ erasures.

\item $\underline{0< \epsilon< (B-N)}$: Let $\zeta$ indicate the number of non-erased coordinates lying in $[T+\epsilon,T+\delta-1]$. Clearly, $\zeta>0$ iff $\epsilon<\delta$. Assuming $\zeta>0$, each such non-erased $c_{i+T}$ that lies in $\epsilon\leq i\leq \delta-1$, helps in recovering one unique erased symbol from the erasure burst. The argument, again, is similar to the one in the proof of Proposition \ref{prop:del_gt_B_minus_N}. This also means that the symbols $\{c_{aB+j}:\epsilon\leq j\leq\delta-1\}$ are known by time $T+\delta-1$.

Now let us see how non-erased symbols in the range $c_{T+i}:\max\{\delta,\epsilon\}\leq i\leq (B-N-1)$ help in the recovery of erased symbols. Note that each $c_{T+i}$ in the range $\max\{\delta,\epsilon\}\leq i\leq (B-N-1)$ takes the form: $c_{T+i}=c_i+c_{B+i}+\ldots+c_{(a-1)B+i}+\gamma_{i,0}c_{aB}+\gamma_{i,1}c_{aB+1}+\ldots+\gamma_{i,\delta-1}c_{aB+\delta-1}$. As we assume $\epsilon>0$, we can remove interference from the set of non-erased symbols $\{c_i,c_{B+i},\ldots,c_{(a-2)B+i}\}$ (these can be easily verified to be non-erased symbols, when $\epsilon>0$). For $\zeta>0$, we can remove the interference from the set of symbols $\{c_{aB+j}:\epsilon\leq j\leq\delta-1\}$ (these are known by time $T+\delta-1$). Thus effectively, each non-erased symbol $c_{T+i}$ in the range $\max\{\delta,\epsilon\}\leq i\leq (B-N-1)$ provides the sum $\hat{c}_{T+i}=c_{(a-1)B+i}+\gamma_{i,0}c_{aB}+\gamma_{i,1}c_{aB+1}+\ldots+\gamma_{i,min\{\epsilon,\delta\}-1}c_{aB+min\{\epsilon,\delta\}-1}$. There are the following two possibilities.


\bit 
\item $\underline{(B-\epsilon)\leq(N+\delta)}$: This condition essentially means that none of the $c_{(a-1)B+i}$ terms appearing in  $\hat{c}_{T+i}$, for $\max\{\delta,\epsilon\}\leq i\leq (B-N-1)$, are part of the burst erasure considered. In this scenario, we can further remove the interfering $c_{(a-1)B+i}$'s from $\hat{c}_{T+i}$'s, where $\max\{\delta,\epsilon\}\leq i\leq (B-N-1)$. Furthermore, if $(B-\epsilon)<\delta$, not all of the symbols $\{c_{aB},c_{aB+1},\ldots,c_{aB+\delta-1}\}$ will be erased. Hence we can rewrite the effective check-sum contributed by each $\hat{c}_{T+i}$ as 
$\hat{c}_{T+i}=\gamma_{i,\max\{\delta-B+\epsilon,0\}}c_{aB+\max\{\delta-B+\epsilon,0\}}+\gamma_{i,1}c_{aB+1}+\ldots+\gamma_{i,min\{\epsilon,\delta\}-1}c_{aB+min\{\epsilon,\delta\}-1}$. The number of remaining erased symbols $c_{aB+i}$ in the range $0\leq i\leq \min\{\epsilon,\delta\}-1$, $\{c_{aB+\max\{\delta-B+\epsilon,0\}}, c_{aB+\max\{\delta-B+\epsilon,0\}+1},$
$\ldots,$ $c_{aB+\min\{\epsilon,\delta\}-1}\}  $ and number of non-erased symbols in the coordinate range $[T+\max\{\epsilon,\delta\},T+B-N-1]$ is given in Table \ref{tab:case1}, for all the cases with $(B-\epsilon)\leq (N+\delta)$. From Table \ref{tab:case1}, for all the cases, the number of remaining erased symbols $\geq$ number of code-symbols (evaluations) in the range $[T+\max\{\epsilon,\delta\},T+B-N-1]$. As $\gamma_{x,y}$'s are chosen carefully as part of a Cauchy matrix, all these equations (evaluations) will be independent. Thus the number of independent equations provided by coordinates in the range $[T+\epsilon,T+B-N-1]$ is equal to $(B-N-\epsilon)$.	

	\begin{table}[ht]
	\centering
	\begin{tabular}{ |c|c|c|c|c|c| } 
		
		\hline
		Condition & No. of remaining erased (unknown)  & No. of  equations  \\ 
		& symbols among the coordinates & from the coordinates\\
		& $[aB,aB+\delta-1]$ & $[T+\max\{\epsilon,\delta\},T+B-N-1]$\\
		\hline
		$(B-\epsilon)\geq \delta$, $\epsilon\geq \delta$   &  $\delta$ &   $(B-N-\epsilon)$   \\
		&   &  \\
		\hline
		$(B-\epsilon)\geq \delta$, $\epsilon < \delta$   & $\epsilon$  &  $(B-N-\delta)$    \\
		&   &  \\
		\hline
		$(B-\epsilon)< \delta$,  $\epsilon \geq \delta$ & $(B-\epsilon)$  &  $(B-N-\epsilon)$    \\
		&   &  \\
		\hline
		$(B-\epsilon)< \delta$, 	$\epsilon < \delta$    & $(B-\delta)$  &  $(B-N-\delta)$    \\
		&   &  \\
		\hline
	\end{tabular}
	\caption{A summary of all the conditions, when $(B-\epsilon)\leq (N+\delta)$.}
	\label{tab:case1}
\end{table}

\item $\underline{(B-\epsilon)>(N+\delta)}$: Let $\alpha'\triangleq(B-\epsilon-N-\delta)$. The number of erased symbols recovered by the non-erased coordinates lying in $[T+\epsilon,T+\delta-1]$ is $\max\{\delta-\epsilon,0\}$.  Note that $\alpha'<(B-N-\delta)$. For each $c_{(a-1)B+i}$ such that $i\in [B-N-\alpha',B-N-1]$, there exists a non-erased $c_{T+i}$ which results in a $\hat{c}_{T+i}=c_{(a-1)B+i}+\gamma_{i,0}c_{aB}+\gamma_{i,1}c_{aB+1}+\ldots+\gamma_{i,\min\{\epsilon,\delta\}-1}c_{aB+\min\{\epsilon,\delta\}-1}$, after removing the known interfering symbols. To the contrary, if $c_{T+i}$ is part of the erased symbols, this would mean that $B\geq (B-N-i)+N+\delta+i\implies\delta \leq 0$, which is a contradiction. Thus $c_{T+i}$ for $i\in [B-N-\alpha',B-N-1]$ gives $\alpha'$ independent equations as they all give one unique component $c_{(a-1)B+i}$ each. For $j\in[\max\{\epsilon,\delta\},B-N-\alpha'-1]$, $c_{T+j}$'s are non-erased and results in a $\hat{c}_{T+j}=\gamma_{j,0}c_{aB}+\gamma_{j,1}c_{aB+1}+\ldots+\gamma_{j,\min\{\epsilon,\delta\}-1}c_{aB+\min\{\epsilon,\delta\}-1}$, after removing the known interfering symbols. These corresponds to $(B-N-\alpha'-\max\{\epsilon,\delta\})=\min\{\epsilon,\delta\}$ equations and $\min\{\epsilon,\delta\}$ unknowns and hence are all independent. Therefore, the total number of independent evaluations supplied by the non-erased coordinates in $[T+\epsilon,T+B-N-1]$ is $\max\{\delta-\epsilon,0\}+\alpha'+\min\{\epsilon,\delta\}=\delta+\alpha'=B-N-\epsilon$.
\eit
\een
Therefore for all the cases of $0\leq \epsilon\leq (B-N)$, $(B-N-\epsilon)$ non-erased code-symbols in the range $[T+\epsilon,T+B-N-1]$ provide that many evaluations, which are independent (over $\mathbb{F}_q$), with respect to the $T+1-(B-\epsilon)$ evaluation points corresponding to the $(T-B+\epsilon+1)$ non-erased locations in $[0,T-1]\cup\{T-N+B\}$. This essentially adds up to $T-N+1$ evaluations over independent evaluation points. As $k=T-N+1$, all the erasures can be corrected. 

In order to prove that the delay constraint of $T$ is met during the recovery from burst erasures, we need to consider only the burst erasures involving at least one of the coordinates $[0,B-N-1]$. It is enough to show that for any erasure burst of length $B$ that starts at coordinate $i\in[0,B-N-1]$, the code-symbol $c_i$ can be recovered with a delay of at most $T$.

 If $c_i$, where $i\in[0,\delta-1]$, is part of a burst-erasure, we fall back on the proof of the corresponding part in Proposition \ref{prop:del_gt_B_minus_N} to see that delay conditions are met. In the following we consider the case of $i\in[\delta,B-N-1]$. 

Consider an erasure burst which starts at $i$, where $i\in[\delta,B-N-1]$. The symbols $\{c_{i+B},c_{i+2B},$
$\ldots,c_{i+(a-1)B},c_{T+i}\}$ will be non-erased. Let $\epsilon$ be as defined previously. If $\epsilon=0$, all the code-symbols in $\{c_{aB},c_{aB+1},\ldots,c_{aB+\delta-1}\}$ will be known by the time $(T+i)$. This follows from the fact that all the symbols $c_T,c_{T+1},\ldots,c_{T+\delta-1}$ are non-erased. After removing the interference from known symbols at coordinates $\{i+B,i+2B,\ldots,i+(a-1)B,aB,aB+1,\ldots,aB+\delta-1\}$, the code-symbol $c_{T+i}$ will yield $c_i$.

For $\epsilon>0$, we first note that, for a burst starting at coordinate $i\in[\delta,B-N-1]$,  $a$ is forced to be equal to $1$. i.e., $T=B+\delta$. From the $\max\{\delta-\epsilon,0\}$ non-erased coordinates in $[T+\epsilon,T+\delta-1]$, we can obtain code-symbols in the coordinate range $[aB+\epsilon,aB+\delta-1=T-1]$. Note that $i=T-B+\epsilon=\delta+\epsilon$. Number of non-erased coordinates in the range $[T+\max\{\delta,\epsilon\},T+i]=i-\max\{\delta,\epsilon\}+1$. For each $j\in [\max\{\delta,\epsilon\},i-1]$, $c_j$ is a non-erased symbol. This follows from our assumption that, the error burst is starting at time $i$. For $j\in [\max\{\delta,\epsilon\},i]$, $c_{T+j}\triangleq c_j+\gamma_{j,0}c_{B}+\gamma_{j,1}c_{B+1}+\ldots+\gamma_{j,\delta-1}c_{B+\delta-1}$ (as $a=1$). After removing the known symbols from the coordinate range $[B+\epsilon,B+\delta-1]$ and $[\max\{\delta,\epsilon\},i-1]$, we have the resultant sums:

\bean
\hat{c}_{T+j} & = & \left\{ \begin{array}{rl} \gamma_{j,0}c_{B}+\gamma_{j,1}c_{B+1}+\ldots+\gamma_{j,\min\{\epsilon,\delta\}-1}c_{B+\min\{\epsilon,\delta\}-1} & \text{if $\max\{\delta,\epsilon\}\leq j\leq i-1$}, \\
	c_j+\gamma_{j,0}c_{B}+\gamma_{j,1}c_{B+1}+\ldots+\gamma_{j,\min\{\epsilon,\delta\}-1}c_{B+\min\{\epsilon,\delta\}-1} & j=i. \end{array} \right.
\eean

For the $i-\max\{\delta,\epsilon\}=\delta+\epsilon-\max\{\delta,\epsilon\}=\min\{\delta,\epsilon\}$ equations in the range $\max\{\delta,\epsilon\}\leq j\leq i-1$, there are that many unknowns. As $\gamma_{x,y}$'s are carefully chosen to be part of a Cauchy matrix, these equations can be solved (at time $(T+i-1)$) to obtain the code-symbols: $\{c_{B}, c_{B+1},\ldots,c_{B+\min\{\epsilon,\delta\}}\}$. Thus at time $(T+i)$, the contribution from these code-symbols can be removed from $\hat{c}_{T+i}$ to obtain $c_i$. This completes the proof.

\bibliographystyle{IEEEtran}
\bibliography{codes_for_streaming}

\end{document}